\newcommand{\col}[1]{\text{col}\{#1\}}
\newcommand{\diag}[1]{\text{diag}\{#1\}}
\newcommand{\mat}[1]{\begin{bmatrix}#1\end{bmatrix}}
\newcommand{\R}{\mathbb{R}}
\newcommand{\spec}[1]{\text{Spec}(#1)}
\newcommand{\xat}[2]{\ifthenelse{\equal{#2}{}}{{x}_{#1}}{{x}_{#1}(#2)}}
\newcommand{\uat}[2]{\ifthenelse{\equal{#2}{}}{{u}_{#1}}{{u}_{#1}(#2)}}
\newcommand{\vi}{v_i}
\newcommand{\ifi}{i_{\text{f},i}}
\newcommand{\ei}{e_i}
\newcommand{\dvi}{\dot{v}_i}
\newcommand{\difi}{\dot{i}_{\text{f},i}}
\newcommand{\dei}{\dot{e}_i}
\newcommand{\vti}{v_{\text{t},i}}
\newcommand{\iexti}{i_{{\text{ext}},i}}
\newcommand{\Cfi}{C_{\text{f},i}}
\newcommand{\Rfi}{R_{\text{f},i}}
\newcommand{\Lfi}{L_{\text{f},i}}
\newcommand{\Yad}{Y_{\text{ad}}}
\newcommand{\Cf}{C_{\text{t}}}
\newcommand{\Rf}{R_{\text{t}}}
\newcommand{\Lf}{L_{\text{t}}}
\newcommand{\B}{M}
\newcommand{\iLall}{i_{\text{L}}}
\newcommand{\ifall}{i_{\text{f}}}
\newcommand{\vall}{v}
\newcommand{\eall}{e}
\newcommand{\diLall}{\dot{i}_{\text{L}}}
\newcommand{\difall}{\dot{i}_{\text{f}}}
\newcommand{\dv}{v_{\Delta}}
\newcommand{\dvall}{\dot{v}}
\newcommand{\deall}{\dot{e}}
\newcommand{\iextall}{i_{{\text{ext}}}}
\newcommand{\Rlj}{R_{\text{l},j}}
\newcommand{\Llj}{L_{\text{l},j}}
\newcommand{\Rl}{R_{\text{l}}}
\newcommand{\Ll}{L_{\text{l}}}
\newcommand{\dvj}{v_{\Delta j}}
\newcommand{\iLi}{i_{\text{L},j}}
\newcommand{\diLi}{\dot{i}_{\text{L},i}}
\newcommand{\viRL}{\tilde{v}_i}
\newcommand{\iextiRL}{\tilde{i}_{{\text{ext}},i}}
\newcommand{\dvjRL}{\tilde{v}_{\Delta j}}
\newcommand{\iLiRL}{\tilde{i}_{\text{L},j}}
\newcommand{\vallRL}{\tilde{v}}
\newcommand{\iextallRL}{\tilde{i}_{{\text{ext}}}}
\newcommand{\dvallRL}{\tilde{v}_{\Delta }}
\newcommand{\iLallRL}{\tilde{i}_{\text{L}}}
\newcommand{\vref}{{v}_{\text{ref}}}
\newcommand{\vopt}{{v}^{*}}
\newcommand{\numOfnodes}{n}
\newcommand{\numOflines}{m}
\newcommand{\kinit}{1}
\newcommand{\kend}{N}
\newcommand{\Qloss}{{Q}_{\text{loss}}}
\newcommand{\ssopt}{\text{Opt}_{\text{ss}}}
\newcommand{\mpctrack}{\text{MPC}_{\text{track}}}
\newcommand{\empc}{\text{MPC}_{\text{econ}}}
\newcommand{\Pconst}{\mathbb{P}}
\newcommand{\Vconst}{\mathbb{V}}
\newcommand{\Xconst}{\mathbb{X}}
\newcommand{\Uconst}{\mathbb{U}}
\newtheorem{theorem}{Theorem}
\newtheorem{remark}{Remark}
\newtheorem{proposition}{Proposition}
\newacronym{mpc}{MPC}{Model Predictive Control}
\newacronym{res}{RES}{renewable energy sources}
\newacronym{dgu}{DGU}{distributed generation unit} 
\newacronym{mg}{MG}{microgrid} 
\title{\LARGE \bf
On MPC-based Strategies for Optimal Voltage References in DC Microgrids
}
\author{Pol Jané-Soneira, Ionela Prodan, Albertus J. Malan, and Sören Hohmann
	\thanks{P. Jané-Soneira, Albertus J. Malan and S. Hohmann are with the Institute of Control Systems, Karlsruhe Institute of Technology (KIT), 76131, Karlsruhe, Germany. Corresponding author is P. Jané-Soneira, {\tt \scriptsize pol.jane@kit.edu}. This work has been partly funded by Germany’s Federal Ministry for Economic Affairs and Energy within the project RegEnZell (reference number 0350062C)}%
	\thanks{I. Prodan is with the Univ. Grenoble Alpes, Grenoble INP$^\dagger{}$, LCIS, F-26000, Valence, France, (e-mail: ionela.prodan@lcis.grenoble-inp.fr). $^\dagger$Institute of Engineering and Management Univ. Grenoble Alpes. I. Prodan's research benefited from the support of the FMJH Program PGMO and from the support to this program from EDF.}%
}
\begin{document}

\maketitle
\thispagestyle{empty}
\pagestyle{empty}

\begin{abstract}
    Modern power systems are characterized by low inertia and fast voltage dynamics due to the increase of sources connecting via power electronics and the removal of large traditional thermal generators. Power electronics are commonly equipped with fast controllers that are able to reach a desired voltage setpoint within seconds. In this paper, we propose and compare two approaches using Model Predictive Control (MPC) to compute optimal voltage references for the power electronic devices in order to minimize the losses in a DC microgrid: i) a traditional setpoint-tracking MPC which receives a previously computed optimal setpoint; ii) an economic MPC which does not require a priori computed setpoints. We show that the economic MPC outperforms the setpoint-tracking MPC in simulations with the CIGRE benchmark system when multiple load disturbances occur. Some insights and discussions related to the stability of the closed-loop system using its dissipativity properties are highlighted for both approaches.
\end{abstract}

\section{Introduction} \label{sec:intro}

In the last years, DC \glspl{mg} have become technically feasible due to the recent advances in semiconductor converter technology \cite{elsayed2015dc}. Furthermore, they have the potential to prevail over their AC counterparts in the future due to their higher efficiency, the more natural interface to most \glspl{dgu}, \glspl{res}, storages and loads \cite{elsayed2015dc, justo2013ac}. 
In addition, DC \glspl{mg} are significantly simpler to regulate, since the frequency control becomes unnecessary \cite{justo2013ac}. The problem of maintaining constant voltage levels in DC \glspl{mg} under varying load conditions is well studied in literature and it is called primary control. On the one hand, droop-based methods \cite{guerrero2010hierarchical,zhao2015distributed} are widely-used decentralized approaches and exhibit favorable properties such as (limited) power-sharing. Several improvements such as nonlinear, adaptive or dead-band droop have also been proposed, as summarized in \cite{gao2019}. However, these methods show load-dependent voltage deviation and steady-state voltage offsets, which need to be compensated by a higher level control.
On the other hand, passivity-based controllers tackling the shortcomings of droop-based approaches have been proposed recently \cite{nahata2020passivity,laib2023decentralized}. These regulators achieve an offset-free regulation of a given voltage reference and exhibit advantageous plug-and-play properties for \glspl{dgu} while guaranteeing overall asymptotic stability via passivity. Similar stability properties can only be achieved with droop-based methods by simplifying the system models with questionable assumptions and approximations \cite{zhao2015distributed}. However, these passivity-based controllers necessitate a secondary control to achieve power-sharing or coordination.  

Secondary control architectures, built on top of droop-controlled or passivity-based controlled \glspl{dgu}, differ considerably in their purpose and nature. The secondary control of droop-controlled \glspl{dgu} is typically a proportional-integral controller, which is used for compensating unwanted voltage drifts. It can be implemented in a centralized, decentralized or distributed manner using consensus protocols \cite{gao2019,simpson2015dapi}. In contrast to droop-based methods, passivity-based primary controllers do not necessitate a secondary control layer in order to restore the desired or feasible voltage levels \cite{nahata2020passivity,laib2023decentralized,strehle2020}. With a secondary control layer freed from ensuring stability and feasibility, it can focus on providing suitable voltage references in order to pursue other objectives, such as minimizing power losses or costs, or achieving power-sharing. In \cite{nahata2022current}, a secondary controller for achieving proportional current sharing is proposed. Similarly, \cite{malan2022distributed} proposes DC power sharing whilst also taking unactuated agents into account. These methods are able to guarantee convergence to a desirable steady-state employing passivity theory. However, as it is the case for the secondary control architectures for droop-based controllers, the dynamic performance for reaching this steady-state is not considered, i.e., voltage overshoots damaging \gls{mg} components may occur. Furthermore, similar to the primary passivity-based controllers, they cannot take into account actuator or state constraints. 

\gls{mpc} for DC \glspl{mg} is an emerging research topic since it allows considering state and input constraints \cite{hu2021model}. Many approaches consider an \gls{mpc}-based secondary control for droop-controlled DC \glspl{mg} \cite{karami2021,vu2015model,kishore2018efficient,guannan2017}. However, since they require droop-based primary controllers, these approaches do not allow a plug-and-play operation of \glspl{dgu}. Other approaches attempt to directly control the Buck converter voltage without an underlying primary controller \cite{noroozi2018model}. However, the filter dynamics necessitate a sampling time in the microseconds range, which makes an online solution of the \gls{mpc} optimization problem feasible only for small \glspl{mg} and short optimization horizons \cite{noroozi2018model}. Recently, economic MPC has been applied to power systems in \cite{jia2020optimal,kohler2017real}. However, these approaches consider simplified AC systems, and the results are not applicable to low inertia DC power systems as considered in this work. 


\emph{Contributions:} We propose a secondary control architecture using \gls{mpc} for DC \glspl{mg} with passivity-based primary controllers, which is able to guarantee constraint satisfaction while minimizing an objective function. In particular, we propose two different receding horizon controller designs: i) a classic setpoint-tracking \gls{mpc}, for which we propose a Lyapunov function for any \gls{mg} configuration and hence proof stability using established methods; ii) an economic \gls{mpc} with better closed-loop performance. Furthermore, we provide extensive simulation results on a realistic benchmark system and a thorough comparison of both MPC-based approaches.

The remainder of this paper is structured as follows. In Section~2, the system model for which we design the \gls{mpc}-based secondary control, comprising the physical system endowed with passivity-based primary controllers, is presented. The tracking and economic \gls{mpc} designs are described in Section~3. In Section~4, we present extensive simulation results comparing both approaches in the nominal case and when disturbances are present. Finally, we draw conclusions and outline future research directions in Section~5.  

\emph{Notation:} %
Lowercase letters $x \in \R^{n}$ represent vectors, and uppercase letters $X \in \R^{n\times n}$ represent matrices. The transpose of a vector $x \in \R^n$ is written as $x^\top$. The vector $x = \col{x_i}$ and matrix $X = \diag{x_i}$ are the $n\times 1$ column vector and $n \times n$ diagonal matrix of the elements $x_i$, $i = 1,\dots, n$, respectively. Let $I_{n}$ denote the $n \times n$ identity matrix and $\spec{X}$ the spectrum of matrix $X$. A directed graph is denoted by $\mathcal{G}(\mathcal{V},\mathcal{E})$, where $\mathcal{V}$ is the set of nodes and $\mathcal{E} \subset \mathcal{V}\times \mathcal{V}$ the set of edges. The cardinality for a set $\mathcal{V}$ is denoted by $\vert \mathcal{V} \vert$. The incidence matrix $M \in \R^{\vert \mathcal{V} \vert \times \vert \mathcal{E} \vert }$ is defined as $M = (m_{ij})$ with $m_{ij} = -1$ if edge $e_i \in \mathcal{E}$ leaves node $v_j \in \mathcal{V}$, $m_{ij} = 1$ if edge $e_i \in \mathcal{E}$ enters node $v_j \in \mathcal{V}$, and $m_{ij} = 0$ otherwise. The Hadamard (element-wise) product of $x$ and $y$ is denoted by $x \circ y$.


\section{System Model}

In this paper, we consider a set $\mathcal{V}$ of $\numOfnodes = \vert \mathcal{V} \vert$ electrical buses connected via a set $\mathcal{E}$ of $\numOflines = \vert \mathcal{E} \vert$ electrical lines. The network topology is described with the graph $\mathcal{G}(\mathcal{V},\mathcal{E})$, where $\mathcal{V}$ are the nodes and $\mathcal{E}$ the edges. Fig.~\ref{fig:cigre} shows such a \gls{mg} with $n=11$ nodes and $m=12$ lines. In the next subsections, we derive the models of the components of the DC \gls{mg}, i.e. the electrical buses and the power lines. In Subsection~\ref{subsec:overall}, we show favorable passivity properties of the overall \gls{mg} system. To conclude the section, we propose a model reduction, discretize the system and provide a Lyapunov function which holds for any \gls{mg} topology, which is necessary for the MPC design in Section~\ref{sec:mpc}.
\begin{figure}[htbp]
    \centering
    \includegraphics[scale=0.9]{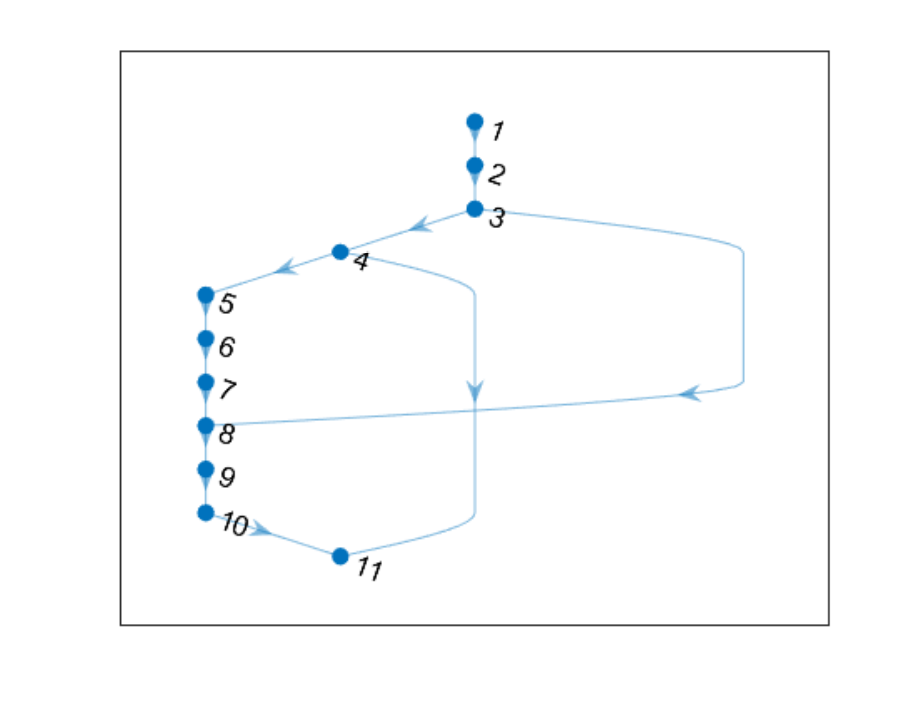}
    \vspace*{-0.5cm}
    \caption{Graphical representation of the meshed CIGRE benchmark \gls{mg} \cite{rudion2006design} with $n = 11$ nodes and $m = 12$ power lines.}
    \label{fig:cigre}
\end{figure}

\subsection{Distributed generation unit (DGU)}
\begin{figure}[ht!]
    \centering
    \includegraphics{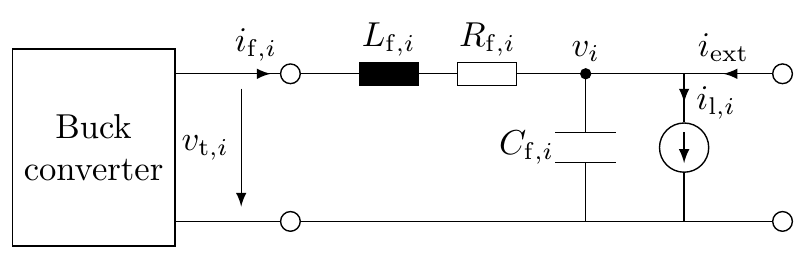}
    \caption{Electric scheme of a bus including a \gls{dgu} and a load.}
    \label{fig:dgu}
\end{figure}
A graphical representation of a bus is shown in Fig.~\ref{fig:dgu}. It is composed of a Buck converter supplying a voltage $\vti \in \R_{\geq 0}$, a load with current $i_{{\rm l},i} \in \R$, and a filter with the resistance $\Rfi \in \R_{> 0}$, inductance $\Lfi \in \R_{>0}$ and capacitance $\Cfi \in \R_{> 0}$. We idealize the Buck converter with the widespread averaging model, by which we disregard the switching behavior~\cite{middlebrook1977}. The dynamics for every bus $i \in \mathcal{V}$ are described with the states $\vi$ and $\ifi$, which describe the node voltage and filter current, by
\begin{subequations}
    \begin{align}
        \Cfi \dvi & = \ifi - i_{{\rm l},i}(\vi) - \iexti \\   
        \Lfi \difi & = \vti -\Rfi \ifi - \vi ,
    \end{align}
\end{subequations}
where $\iexti$ is the cumulative current injected by interconnecting lines. The \glspl{dgu} at the nodes are equipped with a passivity-based primary voltage controller as in \cite{nahata2020passivity}, which regulates $\vi$ to a reference voltage $v_{\text{ref},i}$. The controller adds a state $\ei \in \R$ to the system for the integral action, and employs a state feedback $\vti = k_i^\top [\vi, \ifi, \ei]^\top$ with $k_i \in \R^{3}$. The node dynamics with the passivity-based voltage controller are
\begin{subequations}\label{eq:bus}
    \begin{align}
        \Cfi \dvi & = \ifi - i_{{\rm l},i}(\vi) - \iexti \\   
        \Lfi \difi & = \alpha_i \vi + \beta_i \ifi + \gamma_i \ei \\
        \dei & =  v_{\text{ref},i} - \vi, \label{eq:gridFeeding_nonlinearity}
    \end{align}
\end{subequations}
where 
\begin{align*}
    \alpha_i = \frac{k_{1,i}-1}{\Lfi}, \; \beta_i = \frac{k_{2,i}-\Rfi}{\Lfi}, \; \gamma_i = \frac{k_{3,i}}{\Lfi}.
\end{align*}
are controller parameters. In this work, we assume a time-varying resistive load, i.e. 
\begin{equation} \label{eq:load}
    i_{{\rm l},i}(\vi) = Y_i(t)\vi,
\end{equation}
where $Y_i \in \R>0$ is the load admittance. In the following, we present an important result from \cite{nahata2020passivity} which will prove instrumental for the \gls{mpc} stability analysis. It states that choosing the control parameters according to \cite{nahata2020passivity}, the node dynamics are passive w.r.t. the input-output pair ($-\iexti$,$\vi$). This allows to interconnect arbitrarily many nodes by (passive) lines while guaranteeing stability for constant voltage references $v_{\text{ref},i}$. 
%
\begin{proposition}[\cite{nahata2020passivity}] \label{prop:bus}
    Let $[\bar{v}_i,\bar{i}_{{\rm f},i},\bar{e}_i]$ be an equilibrium point of~\eqref{eq:bus} and define the error variables $\tilde{v}_i = {v}_i - \bar{v}_i$, $\tilde{i}_{{\rm f},i} = {i}_{{\rm f},i} - \bar{i}_{{\rm f},i}$ and $\tilde{e}_i = e_i - \bar{e}_i$. System~\eqref{eq:bus} is equilibrium independent passive w.r.t. the input-output pair $(-\iextiRL,\viRL)$  
    for any $v_{\text{ref},i} > 0$ with the storage function $S_i: \R^3 \rightarrow \R_{\geq 0}$
    \begin{align}
        S_i(\tilde{v}_i,\tilde{i}_{{\rm f},i},\tilde{e}_i) = \mat{\tilde{v}_i \\ \tilde{i}_{{\rm f},i} \\ \tilde{e}_i}^\top \mat{\Cfi & 0 & 0 \\ 0 & \frac{\beta_i}{\omega_i} & \frac{\gamma_i}{\omega_i} \\ 0 & \frac{\gamma_i}{\omega_i} & \frac{\alpha_i \gamma_i}{\omega_i} } \mat{\tilde{v}_i \\ \tilde{i}_{{\rm f},i} \\ \tilde{e}_i} 
    \end{align}
    if the control parameters are chosen such that
    \begin{align}
        k_{1,i} &< 1 \\
        k_{2,i} &< \Rfi \\
        0 < k_{3,i} &< \frac{1}{\Lfi} (k_{1,i} - 1)(k_{2,i} - \Rfi).
    \end{align}
\end{proposition}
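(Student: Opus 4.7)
The plan is to verify passivity directly from the definition by (i) writing the error dynamics, (ii) showing that the proposed quadratic storage function is positive semidefinite under the parameter conditions, and (iii) bounding its derivative along trajectories by $-\tilde{i}_{\text{ext},i}\tilde{v}_i$.

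First I would subtract the equilibrium equations (obtained by setting derivatives in \eqref{eq:bus} to zero together with the load relation \eqref{eq:load}, which is linear in $v_i$) from the full dynamics to obtain the shifted system
\begin{align*}
\Cfi \dot{\tilde{v}}_i &= \tilde{i}_{\text{f},i} - Y_i \tilde{v}_i - \tilde{i}_{\text{ext},i},\\
\Lfi \dot{\tilde{i}}_{\text{f},i} &= \alpha_i \tilde{v}_i + \beta_i \tilde{i}_{\text{f},i} + \gamma_i \tilde{e}_i,\\
\dot{\tilde{e}}_i &= -\tilde{v}_i.
\end{align*}
Note that the equilibrium-independence hinges on the fact that the shifted equations do not depend on $\bar{v}_i, \bar{i}_{\text{f},i}, \bar{e}_i$ nor on $v_{\text{ref},i}$ explicitly, which is a consequence of the linearity of the controller and the load.

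Next I would check positive semidefiniteness of the storage function matrix $P_i$. Since $P_i$ is block diagonal with blocks $\Cfi > 0$ and $\tfrac{1}{\omega_i}\bigl[\begin{smallmatrix}\beta_i & \gamma_i\\ \gamma_i & \alpha_i\gamma_i\end{smallmatrix}\bigr]$, it suffices to show that the $2{\times}2$ block is PSD. By Sylvester's criterion this reduces to requiring $\beta_i/\omega_i \ge 0$ and determinant $\gamma_i(\alpha_i\beta_i - \gamma_i)/\omega_i^2 \ge 0$. Substituting the expressions for $\alpha_i,\beta_i,\gamma_i$ and using $k_{1,i}<1$, $k_{2,i}<\Rfi$, and the cubic inequality $0 < k_{3,i} < (k_{1,i}-1)(k_{2,i}-\Rfi)/\Lfi$ — together with a sign-consistent choice of $\omega_i$ (the product $\omega_i\beta_i$ has to be nonnegative) — makes both conditions hold.

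The core computation is $\dot S_i$ along the shifted dynamics. Differentiating yields
\begin{align*}
\dot S_i &= 2\Cfi \tilde v_i \dot{\tilde v}_i + \tfrac{2}{\omega_i}(\beta_i \tilde i_{\text{f},i} + \gamma_i \tilde e_i)\dot{\tilde i}_{\text{f},i} + \tfrac{2}{\omega_i}(\gamma_i \tilde i_{\text{f},i} + \alpha_i\gamma_i \tilde e_i)\dot{\tilde e}_i.
\end{align*}
Substituting the error dynamics, the cross terms between $\tilde i_{\text{f},i}$ and $\tilde e_i$ produced by the second summand are designed by the choice of $P_i$ to cancel against those produced by the third, and the remaining first summand generates exactly the supply term $-2\tilde v_i \tilde i_{\text{ext},i}$ plus a $-2Y_i\tilde v_i^2$ term. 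After regrouping, one arrives at
\begin{align*}
\dot S_i = -2\tilde v_i \tilde i_{\text{ext},i} - \tilde x_i^\top Q_i \tilde x_i,
\end{align*}
with $\tilde x_i = [\tilde v_i,\tilde i_{\text{f},i},\tilde e_i]^\top$ and $Q_i$ a symmetric matrix whose entries collapse (after the cancellations induced by the specific $P_i$) to a form that is positive semidefinite precisely when the three parameter inequalities in the statement hold. Dividing by two gives the passivity inequality $\dot S_i \le \tilde v_i\,(-\tilde i_{\text{ext},i})$, completing the proof.

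The main obstacle is the bookkeeping in the last step: one must choose the free scalar $\omega_i$ (with the appropriate sign) so that the off-diagonal cross-terms introduced by the controller dynamics cancel, and then verify that the residual quadratic form $Q_i$ is indeed PSD under exactly the three stated inequalities rather than strictly stronger ones. This is where the cubic bound $k_{3,i} < (k_{1,i}-1)(k_{2,i}-\Rfi)/\Lfi$ is tight, since it is what ensures $\det Q_i \ge 0$ after the cancellations.
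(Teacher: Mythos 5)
The paper does not actually prove this proposition: its ``proof'' is a one-line deferral to \cite{nahata2020passivity}. Your direct verification is therefore a genuinely different and more self-contained route, and its overall architecture---shift coordinates, check $P_i \geq 0$ via Sylvester, compute $\dot S_i$ and absorb everything except the supply rate $-\tilde v_i \tilde i_{{\rm ext},i}$ into a negative semidefinite residual---is sound and does go through. What the paper's citation buys is brevity; what your computation buys is an explicit explanation of \emph{why} the three parameter inequalities are exactly the right ones, which the paper never provides.

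Two points in your step (iii) need repair before the bookkeeping closes. First, the cancellation you describe is not the one that occurs: the third summand $\tfrac{2}{\omega_i}(\gamma_i \tilde i_{{\rm f},i} + \alpha_i\gamma_i \tilde e_i)\dot{\tilde e}_i$ contains no $\tilde i_{{\rm f},i}\tilde e_i$ cross term at all, since $\dot{\tilde e}_i = -\tilde v_i$ turns it entirely into $\tilde v_i$-cross terms; it therefore cannot cancel the $\tilde i_{{\rm f},i}\tilde e_i$ terms produced by the second summand. What actually cancels are the $\tilde v_i\tilde e_i$ terms (this is what the choice $P_{33} = \alpha_i P_{23}$, i.e.\ the entry $\alpha_i\gamma_i/\omega_i$, is for) and the $\tilde v_i\tilde i_{{\rm f},i}$ terms; the surviving quadratic in $(\tilde i_{{\rm f},i},\tilde e_i)$ then collapses to the perfect square $\tfrac{1}{\omega_i}(\beta_i\tilde i_{{\rm f},i} + \gamma_i\tilde e_i)^2$, which together with $-Y_i\tilde v_i^2$ is what plays the role of your $-\tilde x_i^\top Q_i \tilde x_i$. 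Second, $\omega_i$ is not a free sign choice: the $\tilde v_i\tilde i_{{\rm f},i}$ cancellation forces $\omega_i = \gamma_i - \alpha_i\beta_i$ (up to how the factor $\Lfi$ in \eqref{eq:bus} is absorbed into $\alpha_i,\beta_i,\gamma_i$), and the third stated inequality $0 < k_{3,i} < \tfrac{1}{\Lfi}(k_{1,i}-1)(k_{2,i}-\Rfi)$ is precisely the statement that this quantity is negative. That single sign simultaneously makes the perfect-square term in $\dot S_i$ nonpositive and, combined with $\beta_i < 0$ and $\det$ of the lower block equal to $\gamma_i(\alpha_i\beta_i - \gamma_i)/\omega_i^2 = -\gamma_i/\omega_i > 0$, makes the lower-right block of $P_i$ positive semidefinite. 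Once you pin $\omega_i$ down this way rather than leaving it as a parameter to be tuned, your argument is complete.
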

\begin{proof}
    The proof can be found in~\cite{nahata2020passivity}.
\end{proof}

\subsection{Power Line}

The power lines are modeled with the pi equivalent circuit~\cite{machowski2020power}, which is shown in Fig.~\ref{fig:line}. It is composed of a series inductance $\Llj \in \R_{>0}$ and resistance $\Rlj\in \R_{>0}$, and two parallel capacitances $\frac{C_{{\rm l},j}}{2}\in \R_{>0}$. Note that the line capacitance is connected in parallel to the bus filter capacitance of the busses which the line is interconnecting. Hence, only one capacitor with a capacitance being the sum of both can be considered. Furthermore, since the capacitance of typical filter capacitors \cite{tucci2016,nahata2020passivity} are higher than line capacitances of medium voltage power lines \cite{rudion2006design} by several orders of magnitude, the line capacitors can be neglected. The dynamics for line $j \in \mathcal{E}$ interconnecting nodes $k,l \in \mathcal{V}$ are thus described by 
\begin{align}\label{eq:line_dynamics} 
    \Llj \diLi = -\Rlj \iLi + \dvj,
\end{align}
where $\dvj = v_l - v_k$ is the input and $v_k$, $v_l$ are the bus voltages. The next proposition serves as an important building block towards an overall Lyapunov function in Subsection~\ref{subsec:overall}. 
\begin{figure}[ht!]
    \centering
    \includegraphics{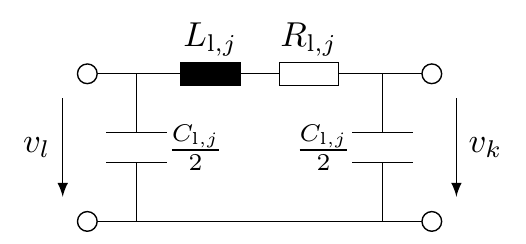}
    \caption{Graphical representation of a line in the pi equivalent circuit.}
    \label{fig:line}
\end{figure}

\begin{proposition}[\cite{brogliato2007dissipative}] \label{prop:lines}
    Let $\bar{i}_{{\rm L},j}, \bar{v}_{\Delta,j}$ be an equilibrium point of~\eqref{eq:line_dynamics} and define $\tilde{i}_{{\rm L},j} = \iLi - \bar{i}_{{\rm L},j}$, $ \tilde{v}_{\Delta,j} = \dvj- \bar{v}_{\Delta,j}$. System~\eqref{eq:line_dynamics} is equilibrium independent strictly passive w.r.t. the input-output pair $(\iLiRL,\dvjRL)$  
    with the storage function $S_j: \R \rightarrow \R_{\geq 0}$
    \begin{align}
        S_j(\tilde{i}_{{\rm L},j}) = \Llj \tilde{i}_{{\rm L},j}^2.
    \end{align}
\end{proposition}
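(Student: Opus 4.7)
The plan is to reproduce the standard energy-based passivity argument for an RL-branch, treated in the shifted (equilibrium-independent) coordinates. Since only one state $i_{\text{L},j}$ appears in~\eqref{eq:line_dynamics} and the storage function is quadratic in the magnetic-energy-like term $L_{\text{l},j}\tilde{i}_{\text{L},j}^{\,2}$, the argument reduces to a direct differentiation along trajectories.

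First I would write out the error dynamics. Subtracting the equilibrium relation $0 = -R_{\text{l},j}\bar{i}_{\text{L},j} + \bar{v}_{\Delta,j}$ from~\eqref{eq:line_dynamics} gives the shifted system
\begin{equation*}
  L_{\text{l},j}\,\dot{\tilde{i}}_{\text{L},j} \;=\; -R_{\text{l},j}\,\tilde{i}_{\text{L},j} + \tilde{v}_{\Delta,j},
\end{equation*}
which is valid for \emph{every} admissible equilibrium pair $(\bar{i}_{\text{L},j},\bar{v}_{\Delta,j})$ satisfying Ohm's law. This is exactly what is needed for the equilibrium-independent notion of passivity.

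Next I would differentiate $S_j(\tilde{i}_{\text{L},j}) = L_{\text{l},j}\,\tilde{i}_{\text{L},j}^{\,2}$ along trajectories, use the error dynamics, and split the resulting expression into a supply term and a dissipation term:
\begin{equation*}
  \dot{S}_j \;=\; 2L_{\text{l},j}\,\tilde{i}_{\text{L},j}\,\dot{\tilde{i}}_{\text{L},j}
  \;=\; 2\,\tilde{i}_{\text{L},j}\bigl(-R_{\text{l},j}\,\tilde{i}_{\text{L},j} + \tilde{v}_{\Delta,j}\bigr)
  \;=\; -2R_{\text{l},j}\,\tilde{i}_{\text{L},j}^{\,2} \;+\; 2\,\tilde{i}_{\text{L},j}\,\tilde{v}_{\Delta,j}.
\end{equation*}
The first term is strictly negative whenever $\tilde{i}_{\text{L},j}\neq 0$ (recall $R_{\text{l},j}>0$), while the second is, up to an inessential scaling constant, the supply rate associated with the input--output pair $(\tilde{i}_{\text{L},j},\tilde{v}_{\Delta,j})$. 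Hence $S_j$ certifies strict (equilibrium-independent) passivity of~\eqref{eq:line_dynamics} with dissipation rate proportional to $R_{\text{l},j}\,\tilde{i}_{\text{L},j}^{\,2}$. Positive definiteness and radial unboundedness of $S_j$ follow trivially from $L_{\text{l},j}>0$.

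There is essentially no obstacle: the only bookkeeping to watch is the conventional factor of $\tfrac{1}{2}$ in the stored magnetic energy of an inductor versus the storage function written in the statement, which here produces a harmless factor of two on both the supply and dissipation sides and can be absorbed into the input/output scaling. The strict-passivity inequality then holds uniformly in the choice of equilibrium, which is the content claimed by the proposition and the property that will be combined with Proposition~\ref{prop:bus} via the incidence matrix $M$ to build the overall microgrid Lyapunov function in Subsection~\ref{subsec:overall}.
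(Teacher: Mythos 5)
Your proof is correct. The paper itself does not spell out an argument here --- it simply cites \cite{brogliato2007dissipative} --- and your shifted-coordinate energy computation is exactly the standard argument that reference supplies: subtract the equilibrium relation, differentiate $S_j(\tilde{i}_{{\rm L},j}) = \Llj\,\tilde{i}_{{\rm L},j}^{\,2}$, and identify $-2\Rlj\,\tilde{i}_{{\rm L},j}^{\,2}$ as the strict dissipation term. Your remark about the factor of two (from the missing $\tfrac12$ in the storage function) is also consistent with how the paper later uses the result: the bus storage functions in Proposition~\ref{prop:bus} carry the same convention, so the doubled supply rates cancel in the skew-symmetric interconnection argument of Proposition~\ref{th:lyapunov_timeCont}.
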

\begin{proof}
    The proof is given in \cite{brogliato2007dissipative}.
\end{proof}
\begin{remark}\label{rem:static_lines}
    For static power lines, i.e. if the inductance is neglected, system~\eqref{eq:line_dynamics} is still strictly passive w.r.t. the same input-output pair, since it represents an offset-free strictly monotonically increasing map~\cite{khalil2002nonlinear}. 
\end{remark}

In the next subsection, the stability of the overall system is investigated, after which we dwell on the model reduction and discretization for using the model for the MPC design. 

\subsection{Overall System} \label{subsec:overall}

A group of $\numOfnodes = \vert \mathcal{V} \vert$ independent buses is described by
\begin{subequations} \label{eq:independent_bus}
    \begin{align}
        \Cf \dvall & = \ifall - Y(t) \vall  - \iextall \\   
            \Lf \difall & = \alpha \vall + \beta \ifall + \gamma \eall \\
            \deall & = v_{\text{ref}} - \vall
    \end{align}
\end{subequations}
where $\alpha = \diag{\alpha_i}$, $\beta = \diag{\beta_i}$ and $\gamma = \diag{\gamma_i}$ contain the control parameters, $\Cf = \diag{\Cfi}$, $\Rf= \diag{\Rfi}$, $\Lf= \diag{\Lfi}$ and $Y = \diag{Y_i}$ are the filter, load and line parameters, and $\vall = \col{\vi}$, $\ifall = \col{\ifi}$ and $\eall = \col{\ei}$ are the stacked states of each DGU $i \in \mathcal{V}$. A group of $\numOflines = \vert \mathcal{E} \vert$ power lines is described by the equations
\begin{align} \label{eq:independent_lines}
    \Ll \diLall &= -\Rl \iLall + \dv,
\end{align}
where $\iLall = \col{\iLi}$, $\dv = \col{\dvj}$ are the stacked states and inputs, and $\Rl = \diag{\Rlj}$ and $\Ll = \diag{\Llj}$ the parameters of each power line $j \in \mathcal{E}$. The interconnection of the buses $i \in \mathcal{V}$ and the power lines $j \in \mathcal{E}$ according to the graph $\mathcal{G}(\mathcal{V},\mathcal{E})$ is described by the incidence matrix $M \in \R^{\numOfnodes \times \numOflines}$ as defined in Section~\ref{sec:intro}. In particular, the voltage drop over the power lines $\dv \in \R^{\numOflines}$ can be described with the voltages of the buses they are connected to, i.e. 
\begin{equation}\label{eq:interconnection1}
    \col{\dvj} = \dv = M^\top \vall = M^\top \col{\vi}.
\end{equation}
The current drawn from bus $i \in \mathcal{V}$ is the sum of the currents through lines connected to bus $i$, i.e.,
\begin{equation} \label{eq:interconnection2}
    \col{-\iexti} = - \iextall =  -M \iLall = -M \col{\iLi}.
\end{equation}
According to \cite[Lemma~1]{nahata2020passivity}, this constitutes a skew-symmetric interconnection. Having described the interconnection, the whole \gls{mg} composed of $\numOfnodes = \vert \mathcal{V} \vert$ nodes interconnected by $\numOflines = \vert \mathcal{E} \vert$ lines obeys the dynamics
\begin{subequations} \label{eq:microgrid_dynamics}
    \begin{align}
        \Cf \dvall & = \ifall - Y(t) \vall  - \B \iLall \label{eq:eq1} \\   
        \Lf \difall & = \alpha \vall + \beta \ifall + \gamma \eall \\
        \deall & = v_{\text{ref}} - \vall \\
        \Ll \diLall &= -\Rl \iLall + \B^T \vall. \label{eq:line_dynamics_1}
    \end{align}
\end{subequations}
The following result about the stability of the interconnected system paves the way for proving stability of the closed-loop system with \gls{mpc}.

\begin{proposition}\label{th:lyapunov_timeCont}
    Consider a system of independent buses and lines as in \eqref{eq:independent_bus}~and~\eqref{eq:independent_lines} interconnected through \eqref{eq:interconnection1}~and~\eqref{eq:interconnection2} as in~\eqref{eq:microgrid_dynamics}. A Lyapunov function of the interconnected system for any $\vref>0$ is given by
    \begin{equation} \label{eq:lyapunov_func}
        V(\tilde{x}) = \sum_{i \in \mathcal{V}} S_i(\tilde{v}_i,\tilde{i}_{{\rm f},i},\tilde{e}_i) + \sum_{j \in \mathcal{E}} S_j(\tilde{i}_{{\rm L},j}) 
    \end{equation}
    with $x = \col{\tilde{v}_i,\tilde{i}_{{\rm f},i},\tilde{e}_i,\tilde{i}_{{\rm L},j}}$. 
\end{proposition}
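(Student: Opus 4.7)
The plan is to exploit the fact that Propositions~\ref{prop:bus} and~\ref{prop:lines} already furnish candidate Lyapunov/storage functions at the subsystem level, and that the interconnection \eqref{eq:interconnection1}--\eqref{eq:interconnection2} is of the standard negative-feedback passive form. Concretely, $V(\tilde x)$ is the sum of individual storage functions, each of which is positive definite in its own error coordinates (for $S_i$ this is guaranteed by the control gain choices stated in Proposition~\ref{prop:bus}, for $S_j$ it is immediate since $S_j = \Lf \tilde{i}_{{\rm L},j}^2$). So the first step is to note that $V$ inherits positive definiteness and radial unboundedness in $\tilde x$.

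Next I would differentiate $V$ along trajectories of \eqref{eq:microgrid_dynamics}. By Proposition~\ref{prop:bus}, each $S_i$ satisfies the equilibrium-independent passivity inequality
\begin{equation*}
\dot S_i \;\le\; -\tilde{i}_{{\rm ext},i}\,\tilde v_i,
\end{equation*}
and by Proposition~\ref{prop:lines}, each $S_j$ satisfies a strict passivity inequality
\begin{equation*}
\dot S_j \;\le\; \tilde{i}_{{\rm L},j}\,\tilde v_{\Delta,j} \;-\; \rho_j\,\tilde{i}_{{\rm L},j}^{\,2}
\end{equation*}
for some $\rho_j>0$. Summing over $i\in\mathcal V$ and $j\in\mathcal E$ gives
\begin{equation*}
\dot V(\tilde x) \;\le\; -\tilde{i}_{\rm ext}^{\top}\tilde v \;+\; \tilde{i}_{\rm L}^{\top}\tilde v_{\Delta} \;-\; \tilde{i}_{\rm L}^{\top}\,\mathrm{diag}\{\rho_j\}\,\tilde{i}_{\rm L}.
\end{equation*}

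The crux is then the cancellation of the cross terms using the skew-symmetric interconnection: substituting $\tilde v_{\Delta}=M^{\top}\tilde v$ from \eqref{eq:interconnection1} and $\tilde{i}_{\rm ext}=M\tilde{i}_{\rm L}$ from \eqref{eq:interconnection2} yields
\begin{equation*}
-\tilde{i}_{\rm ext}^{\top}\tilde v + \tilde{i}_{\rm L}^{\top}\tilde v_{\Delta} = -(M\tilde{i}_{\rm L})^{\top}\tilde v + \tilde{i}_{\rm L}^{\top}M^{\top}\tilde v = 0,
\end{equation*}
so that $\dot V(\tilde x)\le -\tilde{i}_{\rm L}^{\top}\,\mathrm{diag}\{\rho_j\}\,\tilde{i}_{\rm L}\le 0$. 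This establishes $V$ as a Lyapunov function. Asymptotic stability of the full state can then be obtained by a LaSalle-type argument: on the set where $\dot V=0$ one has $\tilde i_{\rm L}\equiv 0$, which propagated through \eqref{eq:line_dynamics_1} and \eqref{eq:eq1} forces $\tilde v\equiv 0$, and the remaining bus dynamics then force $\tilde i_{\rm f}$ and $\tilde e$ to vanish as well.

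The main obstacle I anticipate is not the dissipation computation itself (it is essentially bookkeeping once Propositions~\ref{prop:bus}--\ref{prop:lines} are invoked) but rather the justification that the equilibrium $\bar x$ used to define the error coordinates exists and is unique for the given $\vref>0$ and load profile $Y(t)$ frozen at its current value; without this, the ``for any $\vref>0$'' claim is not meaningful. I would address this by solving the steady-state version of \eqref{eq:microgrid_dynamics}: the integrator equation forces $\bar v=\vref$, which together with the line equation uniquely determines $\bar i_{\rm L}$ (invertibility of $\Rl$), then $\bar i_{\rm f}$ from the KCL at each bus, and finally $\bar e$ from the filter equation, thereby giving a unique admissible equilibrium and validating the equilibrium-independent passivity argument used above. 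Finally, if the inductive line model is replaced by the static one of Remark~\ref{rem:static_lines}, the same derivation goes through because the static map is strictly passive, so only the $\tilde i_{\rm L}$ terms in $V$ are dropped while the cross-term cancellation is unaffected.
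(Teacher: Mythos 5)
Your proposal is correct and follows essentially the same route as the paper: sum the subsystem storage functions from Propositions~\ref{prop:bus} and~\ref{prop:lines}, apply their (strict) passivity inequalities, and cancel the cross terms via the skew-symmetric interconnection \eqref{eq:interconnection1}--\eqref{eq:interconnection2}. Your additional remarks on positive definiteness, equilibrium existence for a given $\vref$, and the LaSalle step go beyond what the paper writes out but are consistent with and strengthen the same argument.
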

\begin{proof}
    System~\eqref{eq:independent_bus} is passive w.r.t. $(-\iextallRL, \vallRL)$ with storage function $S_{\rm bus} = \sum S_i(\tilde{v}_i,\tilde{i}_{{\rm f},i},\tilde{e}_i)$, which follows trivially from Prop.~\ref{prop:bus}, since the bus dynamics in~\eqref{eq:independent_bus} are independent. Applying the same reasoning, system~\eqref{eq:independent_lines} is strictly passive w.r.t. $(\dvallRL, \iLallRL)$ with storage function $S_{\rm lines} = \sum S_j(\tilde{i}_{{\rm L},j})$ by Prop.~\ref{prop:lines}. From \eqref{eq:lyapunov_func}, it holds that
    \begin{subequations}
        \begin{align}
            \dot{V} & = \sum_{i \in \mathcal{V}} \dot{S}_i(\tilde{v}_i,\tilde{i}_{{\rm f},i},\tilde{e}_i) + \sum_{j \in \mathcal{E}} \dot{S}_j(\tilde{i}_{{\rm L},j}) \\
            & < -\iextallRL^\top \vallRL + \iLiRL^\top \dvjRL \stackrel{\eqref{eq:interconnection1},\eqref{eq:interconnection2}}{=} 0. \label{eq:lyapunov_ineq}
        \end{align}
    \end{subequations}
    The inequality in~\eqref{eq:lyapunov_ineq} follows from the passivity of~\eqref{eq:independent_bus} and the strict passivity of~\eqref{eq:independent_lines} as in Prop.~\ref*{prop:bus}~and~Prop.~\ref*{prop:lines}. The last equality follows from the skew symmetric interconnection~\eqref{eq:interconnection1}~and~\eqref{eq:interconnection2}. 
\end{proof}

Note that the Lyapunov function holds for an arbitrary number of interconnected buses and lines, as long as the \gls{dgu} controllers fulfill Prop.~\ref*{prop:bus}. 

\subsection{Model reduction and discrete time model for MPC}

The \gls{mg} model in~\eqref{eq:microgrid_dynamics} contains very fast dynamics, especially the power lines in~\eqref{eq:line_dynamics_1}\footnote{The power line dynamics constitute a first order lag with a time constant of $T = \frac{\Llj}{\Rlj}$, which takes values around $10^{-6}$ for typical medium voltage power lines \cite{tucci2016}.}. This requires very small step sizes when describing these dynamics as a discrete time system, which may compromise the real time optimization of the MPC. The solution adopted here is to neglect the inductances in the lines, i.e. to use the quasi-stationary line approximation \cite{noroozi2018model,qsl1995}. With this approximation, the \gls{mg} dynamics can be described with
\begin{subequations} \label{eq:microgrid_dynamics_withoutLines}
    \begin{align}
        \Cf \dvall & = \ifall - Y(t) \vall  - \B \Rl^{-1} \B^\top \vall \\   
        \Lf \difall & = \alpha \vall + \beta \ifall + \gamma \eall \\
        \deall & = v_{\text{ref}} - \vall . \label{eq:controller_state}
    \end{align}
\end{subequations}
These equations capture the dynamics of the passivity-based controlled \glspl{dgu} interconnected with lossy static lines. Note that~\eqref{eq:lyapunov_func} is also a Lyapunov function for~\eqref{eq:microgrid_dynamics_withoutLines}, since Prop.~\ref{th:lyapunov_timeCont} still applies due to the passivity properties of static lines (see Remark~\ref{rem:static_lines}). System~\eqref{eq:microgrid_dynamics_withoutLines} is rewritten compactly in state-space representation with 
\begin{align*}\small 
    A &= \mat{-\Cf^{-1}(Y+M\Rl M^\top) & \Cf^{-1} & 0 \\ \Lf^{-1}\alpha & \Lf^{-1}\beta & \Lf^{-1}\gamma \\ - I & 0 & 0} \in \R^{3\numOfnodes\times 3\numOfnodes} \\
    B &= \mat{0 & 0 & I}^\top \in \R^{3\numOfnodes\times \numOfnodes}
\end{align*}
and discretized using the forward Euler method with the step size $h$ 
\begin{align}\label{eq:sys_discrete}
    \xat{}{k+1} = A_k \xat{}{k} + B_k \uat{}{k}
\end{align}
with $\xat{}{k} = \col{\vall(k), \ifall(k), \eall(k)}$, $\uat{}{} = \vref(k)$, $A_k = I + hA$ and $B_k = hB $. Furthermore, the assumption of piece-wise constant system parameters (i.e. loads) during the discretization time is made. The following proposition ensures the existence of a Lyapunov function for the discrete time system~\eqref{eq:sys_discrete}, which way paves the way to apply the classical stability theorems for predictive controllers~\cite{mayne2000constrained}.
\begin{proposition} \label{th:lyapunov_timeDisc}
    Let $\lambda_i=a_i + jb_i \in \spec{A}$. There exists a $h_{\rm max}=\min_i -\frac{2a_i}{a_i^2 + b_i^2} \in \R_{>0}$ for which the system~\eqref{eq:sys_discrete} with any step size $h<h_{\rm max}$ is asymptotically stable. Moreover, a Lyapunov function $V_d(\xat{}{k}) = \xat{}{k}^\top P \xat{}{k}$ for system~\eqref{eq:sys_discrete} exists and can be computed by solving the semidefinite programming problem 
    \begin{align}
        P > 0, \qquad A_k^T P A_k + P < 0.
    \end{align}
\end{proposition}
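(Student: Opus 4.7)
The plan is to leverage Proposition~\ref{th:lyapunov_timeCont} to conclude that the continuous-time matrix $A$ is Hurwitz, and then translate this into a step-size condition on the forward Euler discretization via a direct eigenvalue computation. First, fixing the load matrix $Y$ (recalling the piece-wise constant parameter assumption), system~\eqref{eq:microgrid_dynamics_withoutLines} is LTI with dynamics matrix $A$, and Proposition~\ref{th:lyapunov_timeCont} together with Remark~\ref{rem:static_lines} provides a quadratic Lyapunov function whose derivative is strictly negative along trajectories. By the standard converse result for linear autonomous systems, this asymptotic stability is equivalent to $A$ being Hurwitz, so every $\lambda_i = a_i + jb_i \in \spec{A}$ satisfies $a_i < 0$.

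Next, since $A_k = I + hA$, the spectral mapping theorem yields $\spec{A_k} = \{1 + h\lambda_i : \lambda_i \in \spec{A}\}$. Schur stability of $A_k$ requires $|1+h\lambda_i|^2 < 1$ for all $i$, and expanding gives
\begin{equation*}
    (1 + h a_i)^2 + (h b_i)^2 = 1 + 2 h a_i + h^2 (a_i^2 + b_i^2) < 1,
\end{equation*}
which, upon canceling the $1$ and dividing by $h>0$, is equivalent to $h < -\tfrac{2 a_i}{a_i^2 + b_i^2}$. Since $a_i<0$, each bound is strictly positive, so taking $h_{\rm max} = \min_i -\tfrac{2 a_i}{a_i^2 + b_i^2} \in \R_{>0}$ guarantees that every $h < h_{\rm max}$ places the entire spectrum of $A_k$ strictly inside the unit disc, establishing asymptotic stability of~\eqref{eq:sys_discrete}.

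Finally, the existence of a quadratic Lyapunov function follows from classical discrete-time Lyapunov theory: since $A_k$ is Schur, for any chosen $Q \succ 0$ the discrete Lyapunov equation $A_k^\top P A_k - P = -Q$ admits a unique solution $P \succ 0$, which directly translates into the feasibility of the linear matrix inequality $P > 0$, $A_k^\top P A_k - P < 0$ (the proposition's stated SDP appears to contain a sign typo). Then $V_d(x(k)) = x(k)^\top P x(k)$ satisfies $V_d(x(k+1)) - V_d(x(k)) = -x(k)^\top Q x(k) < 0$ for $x(k) \neq 0$.

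The only subtle point in the plan is transferring asymptotic stability from Proposition~\ref{th:lyapunov_timeCont} to the Hurwitz property of $A$ in the presence of the time-varying load $Y(t)$; this is resolved by the piece-wise constant-parameter assumption already invoked for the discretization, so that on each sampling interval $A$ is a fixed Hurwitz matrix and the eigenvalue-based step-size bound applies uniformly across the admissible load values.
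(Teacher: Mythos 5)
Your proof is correct and follows essentially the same route as the paper: deduce $a_i<0$ from the continuous-time Lyapunov result, apply the spectral mapping $\spec{I+hA}=\{1+h\lambda_i\}$, and expand $\vert 1+h\lambda_i\vert<1$ to obtain $h_{\rm max}$; you merely spell out the algebra and the discrete Lyapunov equation step that the paper leaves implicit. Your observation that the stated LMI should read $A_k^\top P A_k - P < 0$ (the printed $+P$ would be infeasible for any $P>0$) is also correct.
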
 

\begin{proof}
    If $\lambda_i=a_i + jb_i \in \spec{A}$, then $h\lambda_i+1 \in \spec{I+hA}$. From Prop.~\ref{th:lyapunov_timeCont}, we know that $a_i<0$. From $\vert h\lambda_i+1 \vert < 1 $ for all $i$ follows $h_{\rm max} = \min_i -\frac{2a_i}{a_i^2 + b_i^2}$. 
\end{proof}

%


\section{Optimal Voltage References} \label{sec:mpc}

In this section, we present the controller design used for finding suitable voltage references such that (i) the power line losses are minimized, (ii) constraint satisfaction is guaranteed, and (iii) stability is achieved. We propose first to use the classical, setpoint-tracking \gls{mpc} \cite{mayne2000constrained}, which requires the prior computation of optimal setpoints. By using the well studied tracking \gls{mpc} theory, we provide a formal stability proof using the results of Prop.~\ref{th:lyapunov_timeDisc}. Secondly, we design an equivalent economic \gls{mpc}, which is shown to be superior in performance than the classical tracking \gls{mpc}. For the economic \gls{mpc} we limit ourselves to the study the closed-loop stability by simulations in Section~\ref{sec:simulation}.

\subsection{Tracking MPC} \label{sec:mpc_track}

In this method, we compute the optimal steady-state node voltages $\vopt(k)$ in advance which is given as a setpoint to the tracking MPC. The tracking MPC then computes a $\uat{}{k}$ such that the node voltages $\vall$ follow the setpoint $\vopt$. The optimal steady-state voltages are computed solving the (open loop) optimization problem 
\begin{equation} \label{eq:ssopt}
    \ssopt(k) :=
    \left\{
    \begin{array}{cl}
         \min_{\vall} & \vall^\top \Qloss \vall \\
        \text{s. t.} & p + \Yad \, \vall \circ \vall  = 0, \\
        & \vall \in \Vconst \\
        & p \in \Pconst
    \end{array}
    \right.
\end{equation}
in which $\vall^\top \Qloss \vall = \vall^\top \B \Rl^{-1} \B^\top \vall \geq 0$ represent the line losses in the system and $p = \col{p_i} \in \R^{\numOfnodes}$ is the power infeed of a \gls{dgu} to the bus. The constraints in~\eqref{eq:ssopt} represent the power flow equations, described through the nodal admittance matrix $\Yad$ \cite{machowski2020power}, and the voltage $\Vconst \subset \R^{\numOfnodes}$ and power constraints $\Pconst \subset \R^{\numOflines}$, which may be different for each bus and \gls{dgu}. Note that the matrix $\Yad$ contains, apart from the line admittances, also the time-varying loads in~\eqref{eq:load}. With an estimation of the loads in the next time steps, this optimization problem is solved for some time steps in advance and $\vopt(k) \in \arg \min \ssopt(k)$ is obtained. Such an optimal setpoint computation is typically performed every few minutes, and a controller ensures that the steady-state is held using feedback control.

For the feedback controller, we propose to use a setpoint-tracking MPC. Having $\vopt(k)$, tracking MPC is defined as
\begin{equation} \label{eq:MPC_tracking}
    \mpctrack :=
    \left\{ 
    \begin{array}{cl}
        \min & \displaystyle \sum_{k=\kinit}^{\kend-1} \bigg\{ \Delta v(k)^\top Q \Delta v(k) + \\ 
        & \quad \Delta u(k)^\top R \Delta u(k) \bigg\} +  V_{\kend}(\Delta v(\kend)) \\
        \text{s. t.} & \xat{}{k+1} = A_k(k) \xat{}{k} + b_k \uat{}{k}, \\
        & x(k) \in \Xconst \\
        & u(k) \in \Uconst \\
        & k \in \left\{ 1,\dots, \kend \right\}, 
    \end{array}
    \right.
\end{equation}
with $\Delta v(k) = v(k) - \vopt(k)$, $\Delta u(k) = u(k) - \vopt(k)\footnote{Note that the input leading to the equilibrium point $\vopt$ is $\uat{}{}=\vopt$.}$, $R = \eta I_{\numOfnodes} \in \R^{\numOfnodes \times \numOfnodes}$, $\eta \in \R_{>0}$ and $Q = I_{\numOfnodes} \in \R^{\numOfnodes\times \numOfnodes}$, is employed to compute the optimal voltage references $\uat{}{} = \vref$ which leads to minimal losses by considering the \gls{dgu}, filter, the lines and loads. The constraints for the state and input variables in~\eqref{eq:MPC_tracking} represent the voltage and current constraints, whereas the controller state~\eqref{eq:controller_state} is not constrained. The matrices $Q$ and $R$ penalize the deviation of the voltage $\vall$ and reference voltage $\vref$ to the optimal steady-state $\vopt$. The following theorem states how to choose the terminal costs $V_{\kend}(\Delta v(\kend))$ such that closed-loop stability is ensured.
\begin{theorem}
    The closed-loop system composed of \eqref{eq:sys_discrete} with the feedback law $\uat{}{k} = u^*(1)$, ${u^*(\tau) = \arg \min \mpctrack}$ with $\tau = \{1,\dots,\kend\}$ is asymptotically stable if the terminal costs are chosen $V_{\kend}(\Delta v(\kend)) = \xat{}{\kend}^\top P \xat{}{\kend}$ with $P$ as in Prop.~\ref{th:lyapunov_timeDisc}.                                            
\end{theorem}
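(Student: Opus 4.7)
My plan is to apply the classical terminal cost / terminal region framework for tracking MPC stability \cite{mayne2000constrained}. The first step is to shift coordinates to the error variables $\tilde{x}(k) = \xat{}{k} - x^*$, where $x^*$ is the equilibrium of~\eqref{eq:sys_discrete} associated with the precomputed steady-state input $\vopt(k)$ (taken as piecewise constant on the prediction window, consistent with the discretization assumption made in Section~II). In these coordinates the regulation target is $\tilde{x} = 0$ and the shifted input is $\Delta u = u - \vopt$, so $\mpctrack$ becomes a regulation problem with quadratic stage cost $\tilde{x}^{\top} C^{\top} Q C \tilde{x} + \Delta u^{\top} R \Delta u$, where $C$ is the linear map that extracts $\Delta v$ from $\tilde{x}$.

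I would then select the \emph{terminal controller} as the constant feedforward $\kappa_f(\tilde{x}) = \vopt$, under which the error dynamics collapse to $\tilde{x}(k+1) = A_k \tilde{x}(k)$. Verifying the four standard Mayne-type assumptions then reduces to: (A1)--(A2) picking a terminal region $\Xconst_N \subseteq \Xconst$ as a sublevel set of $V_{\kend}$ contained in $\Xconst$, with $\vopt \in \Uconst$ by construction of $\ssopt$; (A3) positive invariance of $\Xconst_N$ under $A_k$, which is immediate once $V_{\kend}(\tilde{x}) = \tilde{x}^{\top} P \tilde{x}$ is known to be a Lyapunov function for $\tilde{x}(k+1) = A_k \tilde{x}(k)$ by Prop.~\ref{th:lyapunov_timeDisc}; and (A4), the descent inequality $V_{\kend}(A_k \tilde{x}) - V_{\kend}(\tilde{x}) + \Delta v^{\top} Q \Delta v \leq 0$ for every $\tilde{x} \in \Xconst_N$.

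The main obstacle will be (A4): Prop.~\ref{th:lyapunov_timeDisc} only asserts $A_k^{\top} P A_k - P \prec 0$, which by itself need not dominate the stage cost. I would address this by strengthening the semidefinite program of Prop.~\ref{th:lyapunov_timeDisc} to the LMI $A_k^{\top} P A_k - P + C^{\top} Q C \preceq 0$; feasibility of this tightened inequality follows from strict feasibility of the original one by rescaling $P$ with a sufficiently large constant, and positive-definiteness of the corresponding Lyapunov function is preserved. With (A4) in place, the standard construction of a shifted candidate input sequence (the tail of the previous optimal input appended by $\kappa_f$) produces the value-function descent $J^*(\tilde{x}(k+1)) - J^*(\tilde{x}(k)) \leq -\Delta v(k)^{\top} Q \Delta v(k) - \Delta u(k)^{\top} R \Delta u(k)$, while quadratic upper and lower bounds on $J^*$ follow from $Q \succ 0$, $R \succ 0$, $P \succ 0$ and the compactness of $\Xconst$, $\Uconst$. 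This yields asymptotic stability of $\tilde{x} = 0$, and hence of the closed-loop system.
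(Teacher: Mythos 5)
Your proposal is correct in spirit and is in fact considerably more careful than the paper's own argument, which consists of a single sentence: since $V_{\kend}(x)=x^\top P x$ is a (global) Lyapunov function by Prop.~\ref{th:lyapunov_timeDisc}, asymptotic stability follows from the terminal-cost-without-terminal-constraint variant of classical tracking MPC theory \cite[Sec.~3.7.2.2]{mayne2000constrained}. You instead build an explicit terminal region as a sublevel set of $V_{\kend}$ and verify the four Mayne axioms; both routes are legitimate, but yours surfaces a genuine gap that the paper's one-liner glosses over: axiom (A4) requires $A_k^\top P A_k - P + C^\top Q C \preceq 0$ under the terminal feedforward, whereas Prop.~\ref{th:lyapunov_timeDisc} only delivers $A_k^\top P A_k - P \prec 0$. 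Your rescaling fix is sound (strictness of the original LMI gives $P - A_k^\top P A_k \succeq \epsilon I$, so $cP$ with $c$ large enough dominates $C^\top Q C$), though note that the theorem as stated fixes $P$ ``as in Prop.~\ref{th:lyapunov_timeDisc}'', so strictly speaking your repaired statement is a mild strengthening of the SDP, not the theorem as written. One remaining subtlety that neither you nor the paper fully addresses: the stage cost penalizes only $\Delta v$ and $\Delta u$, so $C^\top Q C$ is merely positive \emph{semi}definite on the full $3\numOfnodes$-dimensional error state $(\tilde v,\tilde i_{\rm f},\tilde e)$; your claimed quadratic lower bound on $J^*$ in terms of the full state does not follow from $Q\succ 0$ alone, and asymptotic stability of the entire state (rather than just $\Delta v\to 0$) requires an additional detectability/observability argument for the pair $(A_k,C)$, or a stage cost that is positive definite in the state.
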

\begin{proof}
    Since the terminal cost $V_N(\xat{}{k}) = \xat{}{k}^\top P \xat{}{k}$ is a Lyapunov function for all $\xat{}{k} \in \R^{3\numOfnodes}$, asymptotic stability without the need of terminal constraints follows directly from the classic setpoint-tracking MPC theory \cite[Sec.~3.7.2.2]{mayne2000constrained}.
\end{proof}
\begin{remark}
    The optimal voltage setpoint for the tracking MPC is considered to be piece-wise constant for the tracking MPC. This is a classic assumption in MPC and is valid if the optimal setpoint computation is sufficiently slow compared to the system dynamics. Since the optimal setpoint computation is typically performed every few minutes and the step size of the \gls{mg} dynamics in Prop.~\ref{th:lyapunov_timeDisc} takes values of few milliseconds, this assumption is admissible and the setpoint-tracking MPC theory applies. 
\end{remark}

\subsection{Economic MPC} \label{sec:mpc_econ}

In this approach, we solve the optimization problem
\begin{equation} \label{eq:MPCecon}
    \empc := 
    \left\{
    \begin{array}{cl}
        \min_{\uat{}{}} & \displaystyle \sum_{k=\kinit}^\kend \vall(k)^\top \Qloss \vall(k) \\
        \text{s. t.} & \xat{}{k+1} = A_k(k) \xat{}{k} + b_k \uat{}{k}, \\
        & x(k) \in \Xconst \\
        & u(k) \in \Uconst \\
        & k \in \left\{ 1,\dots, \kend \right\}, 
    \end{array}
    \right.
\end{equation} 
with $\Qloss$ as in \eqref{eq:ssopt} at every time step and apply only the first control action $u^*(1)$. Hereby, the traditional control structure of computing the optimal setpoint and controlling it by feedback setpoint-tracking MPC is combined into a single feedback structure, see Fig.~\ref{fig:econVStrack}. Thus, it is not necessary to compute or know the optimal setpoint in advance, it results from the control action. It can therefore react immediately to load changes without having to compute new steady-state optimal setpoints, which is an advantage when dealing with volatile \gls{res}. However, since the objective function is not convex w.r.t. the optimal setpoint to be stabilized\footnote{Here, we are not penalizing the deviations to a setpoint, and the cost is not necessarily decreasing until the setpoint is reached \cite{muller2017economic}.}, the mature theory about classical setpoint-tracking \gls{mpc} does not hold \cite{muller2017economic}. The approach falls under the class of economic \gls{mpc}. 
\begin{figure}[t]
    \centering
    \includegraphics[]{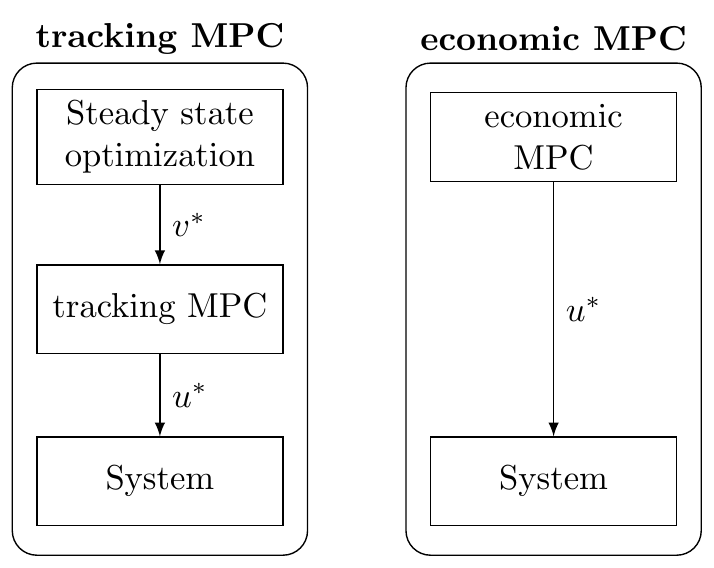}
    \caption{Comparison of setpoint tracking and economic MPC from a procedural perspective}
    \label{fig:econVStrack}
\end{figure}

In this work, the stability of economic MPC is demonstrated through simulations in the following section\footnote{Note that economic MPC is an active field of research, with almost no theoretical results when applied to power systems \cite{muller2017economic,jia2020optimal}.}. In particular, we show that economic MPC outperforms the classical control structure for predictive controllers composed of an optimal setpoint computation and a stabilizing setpoint-tracking MPC.

\section{Simulation Results} \label{sec:simulation}

In this section, simulation results for the closed-loop system with the tracking \gls{mpc}~\eqref{eq:MPC_tracking} and the economic \gls{mpc}~\eqref{eq:MPCecon} are presented. We show that the economic MPC achieves slightly better performance than the tracking MPC in the nominal case, whereas a significant performance increase is observed when disturbances unforeseen by the optimal setpoint computation in~\eqref{eq:ssopt} occur. 

The \gls{mg} considered in this work is based on the CIGRE medium voltage benchmark system, since it represents the network topology of a typical distribution system, and it is aimed to serve as a benchmark system for voltage control studies~\cite{rudion2006design}. It is composed of 11 nodes and 12 power lines and shows a meshed structure (see Fig.~\ref{fig:cigre}). Since the benchmark system is conceived as an AC system, typical DC system parameters are taken from \cite{tucci2016} for the lines and \gls{dgu} filters. The step size is chosen to be $h=\SI{10}{\milli \second}$ for both MPCs, for which Prop.~\ref{th:lyapunov_timeDisc} is fulfilled. The optimization horizon is set to $N = 300$, which corresponds to $\SI{3}{\second}$. For the tracking MPC, new optimal setpoints are computed every 30 seconds. The parameter $\eta$ is set to $\eta = 10^{-2}$ in order to achieve a better voltage tracking.

In the following, we compare both MPC approaches in the nominal case, i.e. when the load is known with no error, and in the case when load disturbances occur. Especially, we compute the transmission losses achieved in each scenario in order to assess the closed-loop performance in Section~\ref{sec:performance}.

\subsection{Scenario 1: Nominal Case}
First, the nominal case is considered, where the load is assumed to be known with no error by the predictive controllers and the steady-state optimization~\eqref{eq:ssopt} for the tracking MPC. The load during the simulation time is shown in Fig.~\ref{fig:load_nominal}. At time $t = \SI{30}{\second}$, a load step occurs in all nodes. The load step time is chosen to be at the same time when new optimal setpoints are computed for the tracking MPC, such that the MPC always receives the optimal setpoints. Note that in real applications, the optimal setpoints will likely not be computed at the same time as the load changes. The scenario presented here is hence the best case scenario for the tracking MPC in order to allow a fair comparison with the economic MPC.
\begin{figure}[ht]
    \centering
    \includegraphics{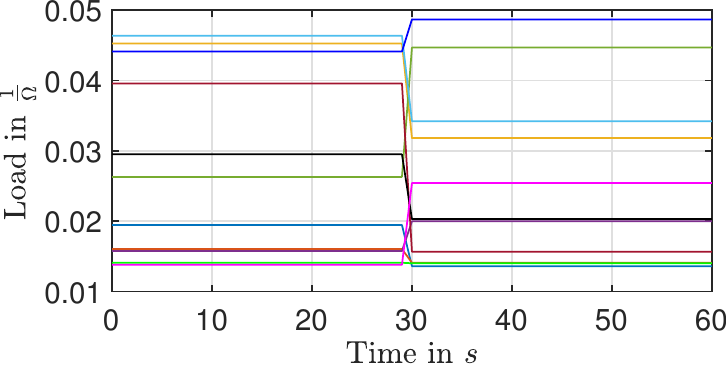} 
    \caption{Load at each bus over the simulation time in the nominal case.}
    \label{fig:load_nominal}
\end{figure}
The node voltages in the \gls{mg} with the tracking \gls{mpc} and the economic \gls{mpc} are shown in Fig.~\ref{fig:voltages_nominal}. The dot-dashed lines are the optimal setpoints computed by the steady-state optimization~\eqref{eq:ssopt}. Due to voltage, current or power flow constraints in~\eqref{eq:ssopt}, different voltage references that induce power flows through the lines are necessary. In the case of tracking \gls{mpc}, the node voltages follow the optimal setpoints (dot-dashed) accurately. Since the optimal setpoints of the voltages are computed with~\eqref{eq:ssopt} and lead to minimal power line losses, the closed-loop behavior with the tracking \gls{mpc} is (only) steady-state optimal. Only during transients, i.e. when load steps occur, considering the system dynamics for the explicit task of minimizing losses instead of regulating a setpoint may improve performance.  
\begin{figure}[ht!]
    \centering
    \includegraphics{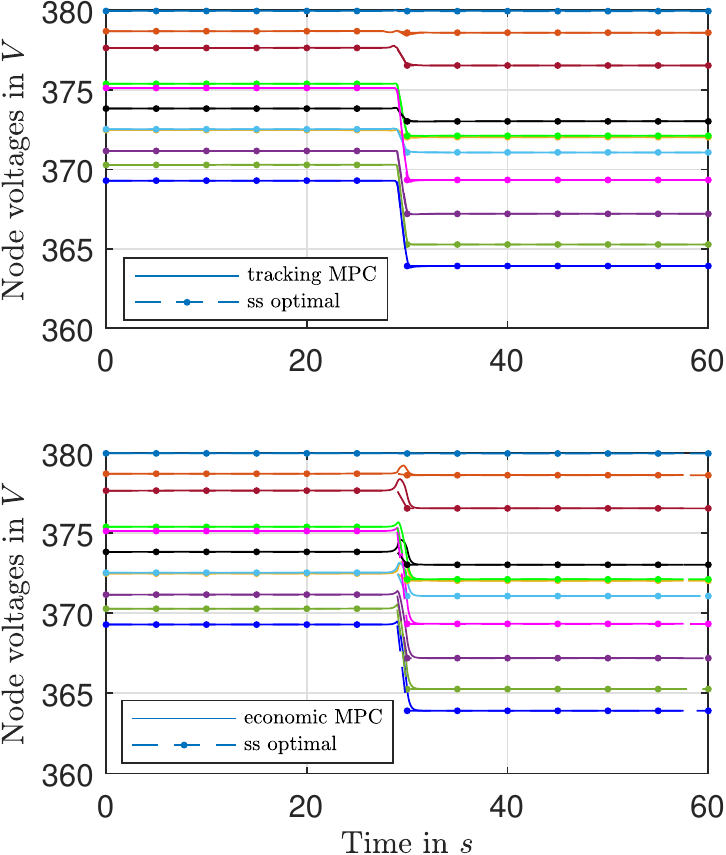} 
    \caption{Microgrid bus voltages when using a tracking MPC (above) and an economic MPC (below) in the nominal case.}
    \label{fig:voltages_nominal}
\end{figure}
The economic \gls{mpc} achieves the optimal steady-state voltages (dot-dashed) without requiring a priori voltage setpoints. The node voltages with the economic \gls{mpc} are identical to the voltages with the tracking \gls{mpc}, except for the time around the load step.

The line losses resulting from the node voltages are computed for comparing the performance of tracking and economic \gls{mpc}. The economic \gls{mpc} achieves \SI{0.2}{\percent} less transmission losses than the tracking \gls{mpc} in the nominal case (viz.~\autoref{tab:results}). This improvement is due to a better transient behavior during the load step at $t = \SI{30}{s}$. In the case of the tracking MPC, since the deviations are penalized, the node voltages follow the steps arising from the setpoint $v^*(k)$ which changes once at $t = \SI{30}{\second}$. This is, however, not optimal w.r.t. minimizing the losses. On the other hand, the economic MPC does not minimize the deviations to some setpoint, it chooses the input on the basis of minimizing the losses, also during transients. Thus, it achieves a better performance. This effect is assumed to gain importance when disturbances occur. Thus, in the next subsection, the performance of both receding horizon control approaches will be compared for the case when disturbances in the load occur.

\subsection{Scenario 2: Disturbances}
In the following, load variations are considered in order to highlight the performance of economic \gls{mpc}. We consider three types of disturbances, 1) unknown load steps, 2) load noise and 3) a line failure. The unknown load steps and load noise are depicted in Figs.~\ref{fig:load_step}~and~\ref{fig:load_noise}, respectively. The unknown load steps occur in Nodes 2 and 6, while the load noise occurs only in Node 1. Since the predictive controllers need a prediction of the load over the optimization horizon, it is assumed that the actual load is measured and considered to be constant over the optimization horizon of \SI{3}{\second}. The optimal setpoint computation in~\eqref{eq:ssopt} is assumed to not have knowledge about these measurable disturbances, since it computed the optimal setpoints in advance. The voltage trajectories are reported in this section, while the performance comparison is made in Section~\ref{sec:performance}. 
\subsubsection{Unknown load steps}
The node voltages in the \gls{mg} equipped with both predictive controllers can be seen in Fig.~\ref{fig:voltages_step}. The tracking MPC still achieves an acceptable regulating behavior when the load step occurs, since it penalizes the deviations to that given setpoint. However, note that the optimal setpoint computed in advance is no longer optimal due to the load steps, which are not known in advance and not considered in~\eqref{eq:ssopt}. On the other hand, the node voltages set by the economic MPC differ w.r.t. the setpoints from~\eqref{eq:ssopt} when the disturbances occur. 
Since the economic MPC does not minimize the deviations to a given setpoint and instead directly minimizes the losses, a new but optimal steady-state arises. 
%
\begin{figure}[ht!]
    \centering 
    \includegraphics{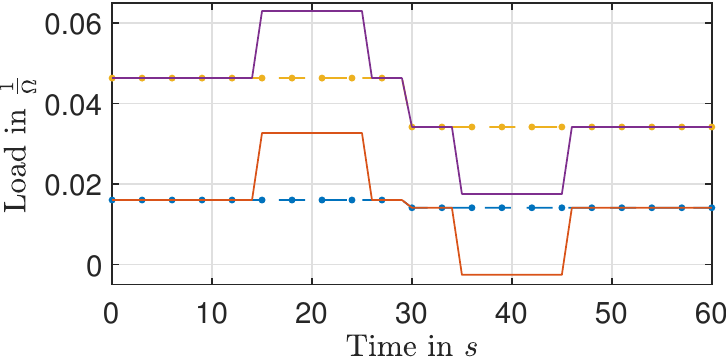} 
    \caption{Predicted (dot-dashed) and real load in the case of unknown steps in Node 2 and 6.} 
    \label{fig:load_step}
\end{figure}
\begin{figure}[ht!]
    \centering 
    \includegraphics{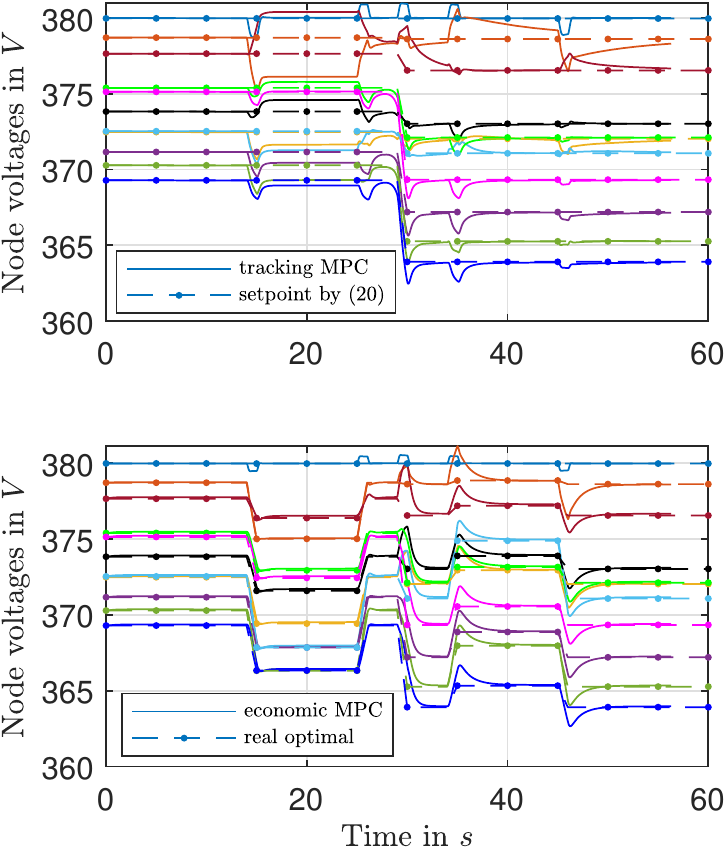} 
    \caption{Microgrid bus voltages when using a tracking MPC (above) and an economic MPC (below) under an unknown load step.}
    \label{fig:voltages_step}
\end{figure}
\subsubsection{Load noise}
The node voltages in the \gls{mg} with both receding horizon controllers in the case of load noise are shown in Fig.~\ref{fig:voltages_noise}. The voltages set by the tracking MPC show significant oscillations around the given optimal setpoints. In contrast, the node voltages produced by the economic MPC are smooth and slightly differ from the optimal steady-state voltages computed in advance, as expected. 
\begin{figure}[ht!]
    \centering 
    \includegraphics{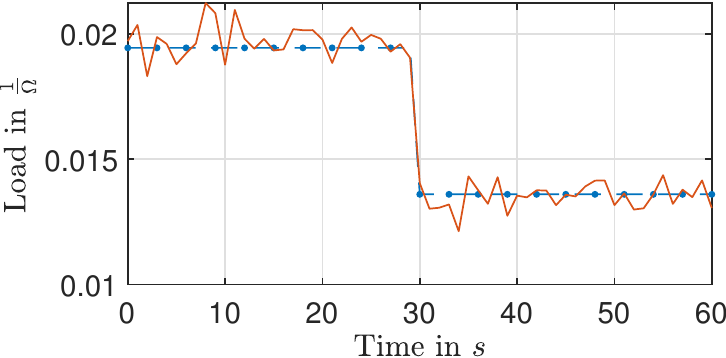} 
    \caption{Predicted (dot-dashed) and real load in the case of load noise in Node 1.}
    \label{fig:load_noise}
\end{figure}
\begin{figure}[ht!]
    \centering 
    \includegraphics{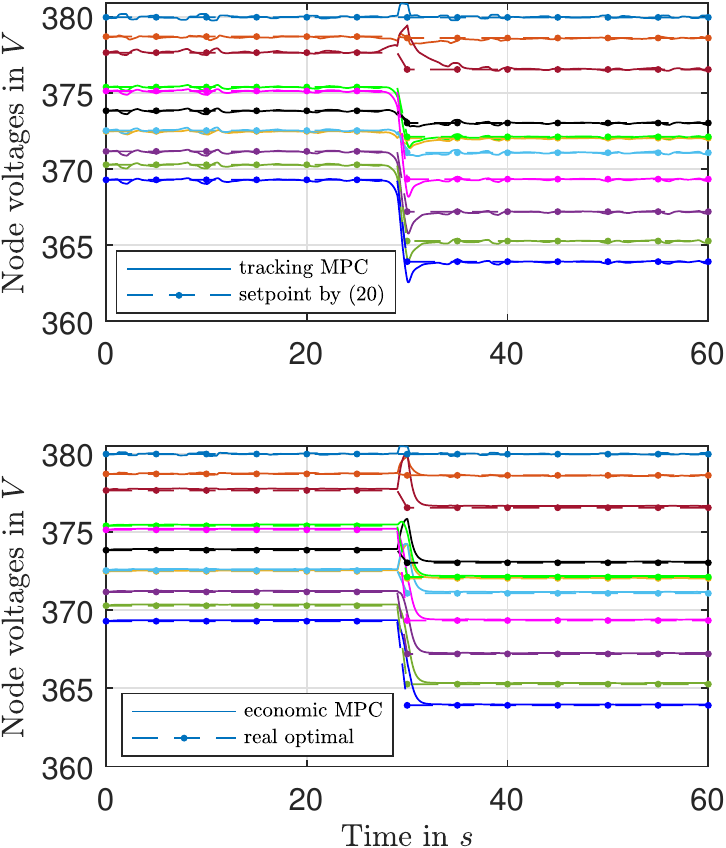} 
    \caption{Microgrid bus voltages when using a tracking MPC (above) and an economic MPC (below) under unknown load noise.}
    \label{fig:voltages_noise}
\end{figure}

\subsubsection{Line failure}
In the scenario of a line failure, the line between Node 3 and 8 fails at time $t_{\rm fail} = \SI{20}{\second}$ and is unavailable thereafter. The line failure is assumed to be measurable and known by the predictive controllers, but not by the optimal setpoint computation in~\eqref{eq:ssopt} which happens in advance. The node voltages in the \gls{mg} with both predictive controllers in the case of a line fail can be seen in Fig.~\ref{fig:voltages_lineFail}. The tracking MPC minimizes the deviation to the provided setpoint (dot-dashed line), which is not optimal for the new \gls{mg} topology. The voltages when using the economic MPC converge to a new steady-state, which is optimal under the new \gls{mg} configuration.

\begin{figure}[ht!]
    \centering 
    \includegraphics{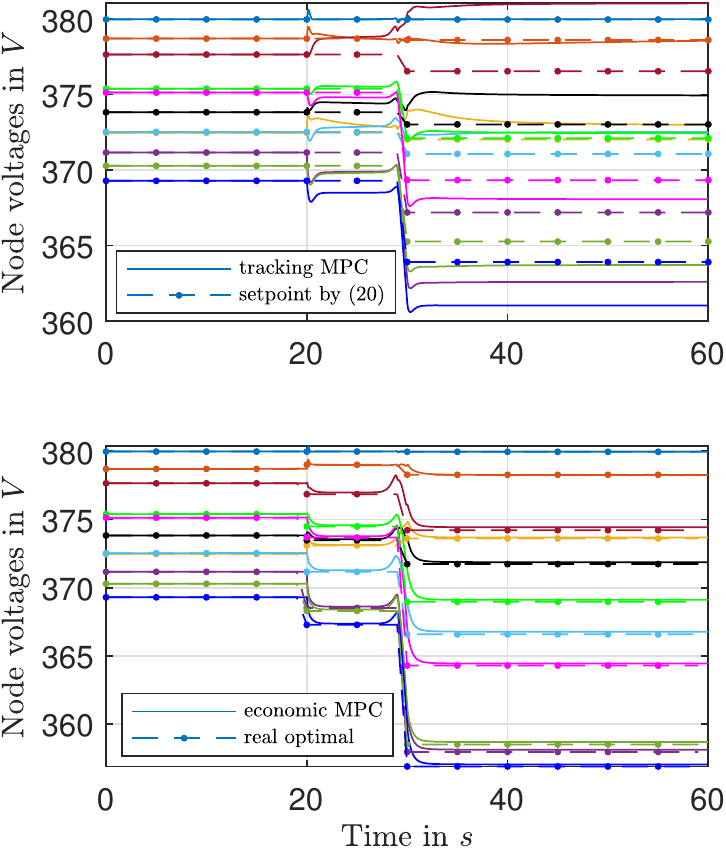} 
    \caption{Microgrid bus voltages when using a tracking MPC (above) and an economic MPC (below) under a line failure.}
    \label{fig:voltages_lineFail}
\end{figure} 

\subsection{Performance comparison} \label{sec:performance}
The performance increase of the economic MPC w.r.t. the tracking MPC is shown in Table~\ref{tab:results}. It shows the reduction of transmission losses for the described scenarios normalized to the transmission losses from the economic MPC. The economic MPC achieves a better performance in all cases. The increase in performance depends clearly on the severity of the disturbance. The small disturbance from the load noise shows a 2.6 \% less transmission losses. For greater disturbances, the reduction increases to 9.3 \% (load step), and 15.6 \% (line failure). Thus, greater disturbances yield greater performance benefits from the economic MPC, since the optimal voltages differ more from the a priori calculated setpoints. Since the economic MPC does not need to compute any optimal setpoint in advance, it has a clear advantage also in terms of computation and enjoys a simpler feedback control structure naturally achieving the optimal setpoints as a result of the control action.
\begin{table}[!ht]
    \centering
    \caption{Performance increase of economic MPC over tracking MPC in different scenarios}
    \begin{tabular}{c c c c} 
        \toprule
        Scenario & Performance increase of economic MPC \\
        \midrule
        Nominal & 0.2 \% \\
        Unknown load steps &  9.3 \% \\
        Unknown load noise & 2.6 \%  \\
        Line failure & 15.6 \%  \\
        \bottomrule
    \end{tabular}
    	 
    \label{tab:results}	
\end{table}

\section{Conclusion}

This paper presents two MPC-based controllers for computing optimal voltage references for a DC \gls{mg} with passivity-based primary controllers. An asymptotically stabilizing setpoint-tracking MPC is designed, and stability is proven by considering the passivity properties of the DC \gls{mg} with the primary controller. Furthermore, we present an economic MPC controller which outperforms the traditional control scheme composed of optimal steady-state computation and setpoint-tracking MPC. 

Future work may concern the rigorous stability proofs of the economic MPC for DC \glspl{mg}. 
Dropping the necessity of computing a priori optimal setpoints is a key advantage of economic MPC which can be exploited for distributed control of large scale energy systems.

\printbibliography

\end{document}